\documentclass{article}
\usepackage[english]{babel}
\usepackage{xcolor}
\usepackage{amsmath}
\usepackage{amsfonts}
\usepackage{amsthm}
\usepackage{amssymb}
\usepackage{calrsfs}
\usepackage{graphicx}
\usepackage{bussproofs}
\usepackage{proof}
\usepackage{latexsym}
\usepackage[ND,SEQ]{prftree}
\usepackage{multicol, caption}
\usepackage{setspace}
\usepackage[nottoc,numbib]{tocbibind}
\usepackage{pdflscape}
\usepackage{stmaryrd}
\usepackage{wasysym}
\usepackage[titletoc,title]{appendix}
\newtheoremstyle{break}
  {\topsep}{\topsep}%
  {\itshape}{}%
  {\bfseries}{}%
  {\newline}{}%
\theoremstyle{break}
\newtheorem{theorem}{Theorem}[section]

\newtheorem{definition}{Definition}
\newtheorem{lemma}{Lemma}

\setlength{\parindent}{0pt} 
\usepackage{parskip}




\raggedbottom

\begin{document}
\title{Proof-Theoretic Functional Completeness for the Connexive Logic \texttt{C}}

\author{Sara Ayhan \& Hrafn Oddsson\\ Ruhr University Bochum}
\date{}
\maketitle

\begin{abstract}
We show the functional completeness for the connectives of the non-trivial negation inconsistent logic \texttt{C} by using a well-established method implementing purely proof-theoretic notions only. Firstly, given that \texttt{C} contains a strong negation, expressing a notion of direct refutation, the proof needs to be applied in a bilateralist way in that not only higher-order rule schemata for proofs but also for refutations need to be considered. Secondly, given that \texttt{C} is a connexive logic we need to take a connexive understanding of inference as a basis, leading to a different conception of (higher-order) refutation than is usually employed. 
\\
\textbf{Keywords}: Functional completeness, Bilateralism, Connexive logic, Proof-theoretic semantics, Refutations, Higher-order derivability
\end{abstract}

\section{Introduction}
From the viewpoint of what would nowadays be called `(bilateralist) proof-theoretic semantics' Franz von Kutschera provided two very relevant and closely related papers \cite{Kutschera1968,Kutschera1969}, in which he proves functional completeness of, respectively, intuitionistic logic and what he calls `direct logic', nowadays better known as Nelson's constructive logic with strong negation, \texttt{N4}.
Proof-theoretic semantics is the view that it is the rules of logical connectives governing their use in a proof system that determine their meaning.\footnote{Not unwarrantedly, von Kutschera called this `Gentzen semantics', since this view's origins are usually attributed to Gentzen's famous paper \cite{Gentzen1934}. For more details on proof-theoretic semantics, see \cite{s-h} and \cite{Francez2015}.}
Von Kutschera commits to this view in these papers in that he assumes rule schemata to define connectives and in conducting the proofs for functional completeness purely proof-theoretically, i.e., without reference to model-theoretic notions.
The procedure is in both papers essentially the same.
In a nutshell, he considers generalized derivability schemata for unspecified logical operators and shows that the connectives of intuitionistic propositional logic and \texttt{N4}, respectively, are functionally complete when defined according to these rule schemata.

The aim of the present paper is to transfer this approach to show the functional completeness for the connectives of the connexive logic \texttt{C}.
As such, our paper is in a tradition of several others using von Kutschera's approach in one way or another; to name just a few: 
Schroeder-Heister, combining von Kutschera's higher level derivability with Prawitz's natural deduction framework, shows completeness for intuitionistic propositional logic in such a system in
\cite{PSH1984b} and for an extension to intuitionistic quantifier logic in \cite{PSH1984a}. 
Wansing uses von Kutschera's approach to show the functional completeness of operators for substructural subsystems of intuitionistic propositional logic \cite{Wansing1993a} and of \texttt{N4} \cite[Ch. 7]{Wansing1993b} and also for several normal modal and tense logics with respect to a representation of the generalized rules in display sequent calculi \cite{Wansing1996}.
Braüner \cite{Brauner2005} does the same for hybridized versions of certain modal logics.

It is especially von Kutschera's later paper \cite{Kutschera1969} that is interesting to consider from a \emph{bilateralist} proof-theoretic semantics point of view and also for the present purpose.\footnote{See \cite{TranslationKutschera} for a translation of this paper as well as \cite{Ayhancomment} for a comment on its relevance and its connection to \cite{Kutschera1968}.}
In this paper von Kutschera is concerned with giving a concept of refutation that is -- other than in intuitionistic logic -- not derivative on a concept of proof but on a par with it.
This is a very bilateralist stance, since bilateralism is the view that in proof-theoretic semantics it is not only the rules governing proofs of complex formulas but also the rules governing their refutations that need to be considered as meaning-determining and, importantly, that these concepts should be considered both as primitive, not one reducible to the other.\footnote{More often in bilateralism the terms `assertion' and `denial' are used as the relevant concepts but like von Kutschera we will stick to the terms `proof' and `refutation'.}
Accordingly, in \cite{Kutschera1969}, in order to define arbitrary $n$-ary connectives, he gives generalized rule schemata not only to introduce \emph{proofs} but also to introduce \emph{refutations}. 

We will start with a brief introduction to our system, considering the idea of connexive logics in general, the fact that \texttt{C} is a contradictory logic and the relevant differences to \texttt{N4} and to von Kutschera's approach (Section 2). 
To provide a basis for our proof of functional completeness, we will firstly consider generalized derivability schemata for unspecified logical operators as von Kutschera's schemata but with a different background notion of higher-level refutability reflecting the connexive reading of consequence (Section 3-4). 
Subsequently, we trace von Kutschera's proof of functional completeness for the \texttt{N4} connectives, yielding the result that the connectives of \texttt{C} are functionally complete for these generalized schemata (Section 5-6).

\section{The motivation: A contradictory logic that is functionally complete}

There has been a recent increase in attention with respect to studying connexive logics, resulting in a number of papers published on the subject.\footnote{See, e.g., the contributions in the special issue of \emph{Studia Logica} on \emph{Frontiers of Connexive Logic} \cite{OmoriWansingSI}.}
The general idea of connexivity that should be implemented by these systems, is that no formula provably implies or is implied by its own negation. Formally this is usually realized by validating so-called \emph{Aristotle's} (AT) and \emph{Boethius' Theses} (BT), while not validating symmetry of implication.
Hence, the following is the most common, agreed-upon definition:

\begin{definition}[Connexive logics]
 A logic is connexive iff 
 
 1.) the following formulas are provable in the system:
\begin{itemize}
		\item AT: ${\sim} ({\sim} A \rightarrow A)$,
		\item AT': ${\sim} (A \rightarrow {\sim} A)$,
		\item BT: $(A \rightarrow B) \rightarrow {\sim}(A \rightarrow {\sim} B)$,
		\item BT': $(A \rightarrow {\sim} B) \rightarrow {\sim}(A \rightarrow B)$, and
\end{itemize}

2.) the system satisfies non-symmetry of implication, i.e., $(A\rightarrow B) \rightarrow (B \rightarrow A)$ is not provable.
\end{definition}
We will concentrate here on the system \texttt{C}, introduced in \cite{Wansing2005}, which is obtained by changing the falsification conditions of implication in Nelson's four-valued constructive logic with strong negation, \texttt{N4} \cite{AN1984, Nelson1949}.\footnote{To give more context and references for the system \texttt{C}: See, e.g., \cite{Wansing2024} and \cite{WansingWeber} for motivations for \texttt{C}, \cite{Olkhovikov2023}, where a direct completeness proof for 1st-order \texttt{C} is given, or \cite{FazioOdintsov2024} for an algebraic investigation of the system. For a sequent calculus for \texttt{C}, see Appendix.}

The following is a natural deduction calculus for the propositional fragment of \texttt{C}, which is adapted from a bilateral version given in \cite[p. 419f.]{Wansing2016}. 
Although in the paper introducing this system \cite{Wansing2005}, the aim is to present a connexive \emph{modal} logic, we will not be concerned with modal operators here and rather concentrate on the way this system is constructed from \texttt{N4}.

\vspace{0.5cm}

{\Large\textbf{\texttt{NC}}} 
\vspace{0.2cm}

\textbf{Rules for ${\sim}$:}
\vspace{0.2cm}

\quad   
\infer[\scriptstyle{\sim} {\sim} I]{{\sim} {\sim} A}{\infer*{A}{\Gamma}}
\quad 
\infer[\scriptstyle{\sim}{\sim} E]{A}{
	\infer*{{\sim} {\sim} A}{\Gamma}}

\vspace{0.2cm}

\textbf{Rules for $\wedge$:}

\vspace{0.2cm}
\quad  
\infer[\scriptstyle\wedge I]{A \wedge B}{\;\infer*{A}{\Gamma} \quad \quad \infer*{B}{\Delta}}
\quad   
\infer[\scriptstyle\wedge E_{1}]{A}{\infer*{A \wedge B}{\Gamma}}
\quad   
\infer[\scriptstyle\wedge E_{2}]{B}{\infer*{A \wedge B}{\Gamma}}

\vspace{0.2cm}
\quad   
\infer[\scriptstyle{\sim}\wedge I_{1}]{{\sim}(A \wedge B)}{\infer*{{\sim} A}{\Gamma}}
\quad \quad
\infer[\scriptstyle{\sim}\wedge I_{2}]{{\sim}(A \wedge B)}{\infer*{{\sim} B}{\Gamma}}
\quad
\hspace{-0.5cm}  
\infer[\scriptstyle{\sim}\wedge E]{C}{\;\;\;\;\;\;\;\;\;\infer*{{\sim}(A \wedge B)}{\Gamma} \quad \infer*{C}{[{\sim} A], \Delta} \quad \infer*{C}{[{\sim} B], \Theta }}
\vspace{0.2cm}

\textbf{Rules for $\vee$:}

\vspace{0.2cm}
\quad  
\infer[\scriptstyle\vee I_{1}]{A \vee B}{\infer*{A}{\Gamma}}
\quad  \quad
\infer[\scriptstyle\vee I_{2}]{A \vee B}{\infer*{B}{\Gamma}}
\quad 
\hspace{-0.5cm} 
\infer[\scriptstyle\vee E]{C}{\;\;\;\;\;\;\;\;\;\infer*{A \vee B}{\Gamma} \quad \infer*{C}{ [A], \Delta} \quad \infer*{C}{[B], \Theta}}

\vspace{0.2cm}

\quad  
\infer[\scriptstyle{\sim}\vee I]{{\sim}(A \vee B)}{\;\infer*{{\sim} A}{\Gamma}\quad \quad\infer*{{\sim} B}{\Delta} }
\quad  \quad \quad
\infer[\scriptstyle{\sim}\vee E_{1}]{{\sim} A}{\infer*{{\sim}(A \vee B)}{\Gamma}}
\quad  \quad
\infer[\scriptstyle{\sim}\vee E_{2}]{{\sim} B}{\infer*{{\sim}(A \vee B)}{\Gamma}}
\vspace{0.2cm}

\textbf{Rules for $\rightarrow$:}

\vspace{0.2cm}

\quad 
\infer[\scriptstyle\rightarrow I]{A \rightarrow B}{
	\infer*{B}{[A],\Gamma}}
\quad \hspace{-0.35cm} 
\infer[\scriptstyle\rightarrow E]{B}{\infer*{A}{\Gamma} \ \quad \infer*{A \rightarrow B}{\Delta}}
\quad \hspace{-0.35cm}
\infer[\scriptstyle{\sim}\rightarrow I]{{\sim}(A \rightarrow B)}{
	\infer*{{\sim} B}{[A],\Gamma}}
\quad \hspace{-0.35cm} 
\infer[\scriptstyle{\sim}\rightarrow E]{{\sim} B}{\infer*{A}{\Gamma} \ \quad \infer*{{\sim}(A \rightarrow B)}{\Delta}}

 The difference between a natural deduction system  for \texttt{N4}, as can be found in \cite[p. 97]{Prawitz1965}, for example, and \texttt{NC} lies in the rules for negated implication.
 We obtain a connexive system by ${\sim\rightarrow} I$ and ${\sim\rightarrow} E$, as given above, replacing these rules in \texttt{N4}:
 
 \quad 
\infer[\scriptstyle\sim\rightarrow I]{\sim(A \rightarrow B)}{{A}\quad\quad {\sim B}}
\quad \quad
\infer[\scriptstyle\sim\rightarrow E_{1}]{A}{\sim(A \rightarrow B)}
\quad \quad
\infer[\scriptstyle\sim\rightarrow E_{2}]{\sim B}{\sim(A \rightarrow B)}

Von Kutschera takes \texttt{N4} to express naturally a concept of direct refutation, represented by the connective ${\sim}$, which we call `strong negation'.
However, with \texttt{C} being connexive, the interpretation of what it means to refute an implication $A \rightarrow B$ is very different from \texttt{N4}, where this means to have a proof of $A$ and ${\sim} B$.
With a connexive implication refuting an implication $A \rightarrow B$ means to derive a refutation of $B$ from the assumption $A$.\footnote{To be more precise, actually this notion captures what is sometimes called a ``hyperconnexive'' understanding of implication, which is not only reflected in the validity of Boethius' thesis $(A\rightarrow B) \rightarrow {\sim}(A\rightarrow {\sim} B)$ as a characteristic principle of connexive logic, but also in the validity of its converse. For remarks and an overview on the differing terminology in connexive logic, see \cite{WansingOmori2024}.}
Thus, this needs to be reflected in a generalized conception of `higher-order' refutability, too.
Another peculiarity arising from this is that \texttt{C} is a \emph{contradictory logic}.
We use this term for logics going beyond paraconsistency by being not only not explosive but also containing contradictory formulas (here: formulas of the form $A$ and ${\sim} A$) in their set of theorems while still being non-trivial.\footnote{See, e.g., \cite{Wansingforthcoming}, where it is argued that it is theoretically rational to work with non-trivial negation inconsistent logics and a list of such logics is presented, or \cite{NikiWansing2023}, for examples and a classification of provable contradictions in \texttt{C} and an extension of \texttt{C}.} To give a simple example of such a provable contradiction in \texttt{C}:
\vspace{0.2cm}

\begin{center}
    
\quad
\infer[\scriptstyle \rightarrow I]{(A \wedge {\sim} A) \rightarrow A}{\infer[\scriptstyle \wedge E_1]{A}{{[A \wedge {\sim} A]}}}
\quad \quad
\infer[\scriptstyle {\sim}\rightarrow I]{\sim((A \wedge {\sim} A) \rightarrow A)}{\infer[\scriptstyle \wedge E_2]{{\sim} A}{{[A \wedge {\sim} A]}}}
\end{center}

The proof of functional completeness for \texttt{C} can essentially proceed in the fashion of von Kutschera's, so we do not think it is necessary to provide it in full length here but we will restrict the paper to a sketch of the proof while making the differences explicit. 

There is one important conceptual point worth mentioning, though.
When discussing under what conditions we can give rules for refutations of arbitrary connectives, von Kutschera says that ``[o]bviously, these rules [for refutation] cannot be devised independently of [the rules for proofs] if the consistency of the Gentzen calculi is to be maintained when these rules are added'' \cite[p. 108, (translated by the authors)]{Kutschera1969}.
What he fails to mention in this context is that our conception of how precisely to refute inferences plays a crucial role here.
We also give the refutation rules dependently on the proof rules in the same manner as it is done in \cite{Kutschera1969} and we do get the completeness result for the \emph{inconsistent} logic \texttt{C}.
So, in light of our adjustments of von Kutschera's approach for \texttt{C} (Section 4), the dependence of the refutation rules on the proof rules seems  to be an independent matter, separate from any considerations about consistency.
It is rather the underlying notion of refutation that is important here and with respect to \emph{that} our approaches are obviously essentially different.
While von Kutschera (without any further explanation or justification) assumes a notion of refutation that captures the, in many logics, usual way of interpreting a negated implication, i.e., the refutability of a derivation going from $A$ to $B$ is defined by the provability of $A$ and the refutability of $B$, our notion will capture the connexive understanding of implication.
Thus, the refutability of a derivation going from $A$ to $B$ will be defined by the refutability of $B$ given the assumed provability of $A$.
So, it is rather von Kutschera's underlying assumption that derivability should basically behave as implication does in \texttt{N4} that is critical for consistency, not the factor of dependent definitions between proofs and refutations.

\section{The language}
We suppose a language $L$ consisting of propositional constants and connectives for which there are formulas specified.
We will inductively define an $R$-expression over $L$ as follows:

1. Every formula over $L$ is an $R$-expression.

2. If $S$ is an $R$-expression over $L$ which does not have the form $-T$, then so is $-S$ (which can be read as `refutation of $S$').

3. If $S_1,..., S_n, T$ are $R$-expressions over $L$, then so is ($S_1,..., S_n \Rightarrow T$).

4. Nothing else is an $R$-expression over $L$.

\noindent Outer brackets of $R$-expressions may be omitted.
We will use $A, B, C,...$ for formulas and $S, T, R,...$ for $R$-expressions.
For (possibly empty) series of $R$-expressions separated by commas $\Gamma, \Delta, \Theta, ...$ are used.

\noindent The \emph{R-degree} of $R$-expressions is now defined as follows.
A formula of $L$ has the $R$-degree 0.
If $S$ has $R$-degree $n$, then so does $-S$.
If $n$ is the maximum of the $R$-degrees of $\Delta, S$, then $n+1$ is the $R$-degree of $\Delta \Rightarrow S$.
$R$-expressions of $R$-degree 1 will be called \emph{sequents}. 

Furthermore, we will define the following: $S$ is \emph{R-subformula} of $S$ and all $R$-subformulas of $S$, resp. $\Delta$, $S$, are also $R$-subformulas of $-S$, resp. $\Delta \Rightarrow S$.
$R$-subformulas of $R$-degree 0 of an $R$-formula $S$ will be called the \emph{formula components} of $S$.

\section{Higher-Order Derivability}

Now we want to define the notion of derivability for $R$-expressions of arbitrary $R$-degree.
Therefore, we define a calculus in which not only sequents but $R$-expressions of arbitrary degree are allowed. 
The following structural rules hold:

\begin{center}
\vspace{0.4cm}
\quad \infer[RF]{S\Rightarrow S}{}\hspace{7em}\infer[WL]{\Delta, T \Rightarrow S}{\Delta \Rightarrow S} 

\vspace{0.2cm}
\quad\infer[PL]{\Delta, T, S, \Gamma \Rightarrow U}{\Delta, S, T, \Gamma \Rightarrow U}\hspace{6em} \infer[CL]{\Delta, S \Rightarrow T}{\Delta, S, S \Rightarrow T }

\vspace{0.2cm}
\quad \infer[Cut]{\Delta \Rightarrow T}{\Delta \Rightarrow S \qquad \Delta, S \Rightarrow T}
\end{center}

\noindent 

We add two rules, which tell us how to introduce an $R$-expression of arbitrary degree on the right and on the left side of $\Rightarrow$:

\begin{center}
\vspace{0.3cm}
\quad \infer[RI^+]{\Gamma \Rightarrow (\Delta \Rightarrow S)}{\Gamma, \Delta \Rightarrow S}\quad\quad \infer[LI^+]{ \Gamma, (\Delta \Rightarrow S) \Rightarrow T}{\Gamma \Rightarrow \Delta \quad\quad \Gamma, S \Rightarrow T}
\vspace{0.2cm}
\end{center}

\noindent The expression $\Gamma \Rightarrow \Delta$ is here to be read as  $\Gamma \Rightarrow U_1;...;\Gamma \Rightarrow U_n$ where $\Delta$ is the series of $R$-expressions $U_1,...U_n$.

\noindent In line with our connexive understanding of inference, we define the refutability of $\Gamma \Rightarrow S$ by the refutability of $S$ under the assumption of the provability of $\Gamma$ and the refutability of $\Gamma \Rightarrow -S$ by the provability of $S$ under the assumption of the provability of $\Gamma$.
To formulate this as one rule, we can fix that $--S$ stands for $S$.
Then we have the following rule:

\begin{center}
\vspace{0.3cm}
\quad\infer[RI^-]{ \Delta \Rightarrow -(\Gamma \Rightarrow S)}{\Delta, \Gamma \Rightarrow -S}
\vspace{0.2cm}
\end{center}

\noindent If the reverse is demanded to hold as well, i.e.:

\begin{center}
\vspace{0.2cm}
\quad\infer{\Delta \Rightarrow -(\Gamma \Rightarrow S)}{ \Delta, \Gamma \Rightarrow -S}
\vspace{0.2cm}
\end{center}

\noindent then we also obtain a condition on how to derive something \emph{from} $R$-expressions of the form $-(\Gamma \Rightarrow S)$:

\begin{center}
\vspace{0.2cm}
\quad\infer[LI^-]{\Delta, -(\Gamma \Rightarrow S) \Rightarrow T}{\Delta \Rightarrow \Gamma \quad\quad \Delta, -S \Rightarrow T}
\vspace{0.2cm}
\end{center}

We will write $\Delta\vdash S$ to mean that there is a derivation from the $R$-expressions of $\Delta$ to $S$, and we write $\Delta\vdash \Gamma$ to mean that $\Delta\vdash S$ for each $S$ from $\Gamma$.

Now that we have our deductive framework, let us briefly outline the remainder of the paper.
First, we adopt von Kutschera's generalized introduction rules for arbitrary connectives.
We then give the introduction rules for the usual connectives for \texttt{C} and follow von Kutschera's argument step by step to obtain functional completeness for \texttt{C}, meaning that any connective expressible by the generalized schemata can be explicitly defined through the connectives of \texttt{C}.
Finally, we provide a way to translate this framework into a standard sequent calculus of \texttt{C} and obtain that a sequent is derivable in this framework iff its translation is derivable in \texttt{C}.

\section{Defining the logical operators}
Now we extend our calculus to a calculus $\texttt{SC}_\infty$, which allows us to define an $n$-ary connective $F$ by giving general schemata of rules telling us how to introduce expressions of the form $F(A_1,..., A_n)$ and of the form $-F(A_1,..., A_n)$ on the right or left side of $\Rightarrow$. 
For the formulations of the schemata for rules to introduce $F(A_1,..., A_n)$ we will follow \cite[p. 108]{Kutschera1969}:

\vspace{0.2cm}
(I)
\begin{center}
    
\quad \infer{\Delta \Rightarrow F(A_1,..., A_n)}{\Delta \Rightarrow S_{1 1}\quad ...\quad \Delta \Rightarrow S_{1 s_{1}}}\quad ... \quad \infer{\Delta \Rightarrow F(A_1,..., A_n)}{\Delta \Rightarrow S_{t 1}\quad ...\quad \Delta \Rightarrow S_{t s_{t}}}

\end{center}

\vspace{0.2cm}
(II)
\begin{center}
    \quad \infer{\Delta, F(A_1,..., A_n) \Rightarrow T}{\Delta, S_{1 1},..., S_{1 s_{1}} \Rightarrow T \quad ... \quad \Delta, S_{t 1}, ..., S_{t s_{t}} \Rightarrow T }
\end{center}

\vspace{0.2cm}

\noindent where expressions of the form $S_{i k_{i}}$ ($i=1,...,t; k_{i}=1,...,s_{i}$) are $R$-expressions whose components are formulas from $A_1,...,A_n$, while the components of $\Delta$ are unspecified.

\noindent Also following \cite[p. 108]{Kutschera1969}, the rules for $-F(A_1,..., A_n)$ will depend on I) and II): 

\vspace{0.2cm}
(III)
\begin{center}
    \quad\infer{\Delta \Rightarrow -F(A_1,..., A_n)}{\Delta \Rightarrow -S_{1 k_{11}}\quad ...\quad \Delta \Rightarrow -S_{t k_{t1}}}\quad ... \quad \infer{\Delta \Rightarrow -F(A_1,..., A_n)}{\Delta \Rightarrow -S_{1 k_{1r}}\quad ...\quad \Delta \Rightarrow -S_{t k_{tr}}}
\end{center}
\vspace{0.2cm}

\noindent where $r=s_1\times ...\times s_t$ and $k_{i l}=1, ..., r$.

\vspace{0.2cm}
(IV)
\begin{center}
\quad\infer{\Delta, -F(A_1,..., A_n) \Rightarrow T}{\Delta, -S_{1 k_{11}},..., -S_{t t_{t1}} \Rightarrow T\quad ... \quad \Delta, -S_{1 k_{1r}},..., -S_{t k_{1r}} \Rightarrow T}
\end{center}
\vspace{0.2cm}

Unlike in \cite{Kutschera1969}, though, the schemata  III) and IV) are not motivated by a desire for consistency. Rather, we make a decision to read off III) and IV) as the De Morgan duals of I) and II).

Before going any further, let us start by noting some technical results telling us when we can substitute one $R$-expression for another. Their proofs are all straightforward adaptations from \cite{Kutschera1968} and \cite{Kutschera1969}. 

\noindent We say that $S \Rightarrow^{s} T$ stands for $S \Rightarrow T$ and $-T \Rightarrow -S$, and $S \Leftrightarrow^{s} T$ for $S \Rightarrow^{s} T$ and $T \Rightarrow^{s} S$. 
If $S \Leftrightarrow^{s} T$, we call $S$ and $T$ \emph{strictly equivalent}.
Then we get the following substitution theorem for the calculus:

\begin{theorem}\label{substitution}$S \Leftrightarrow^{s} T \vdash U_S \Leftrightarrow^{s} U_T$.
\end{theorem}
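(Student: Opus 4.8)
The plan is to prove the substitution theorem by induction on the construction of the $R$-expression context $U$, i.e.\ on how $U_S$ is built up from the designated occurrence of $S$. Here $U_S$ denotes an $R$-expression with a distinguished sub-$R$-expression $S$, and $U_T$ is the result of replacing that occurrence by $T$. The base case is $U_S = S$ itself, where the claim $S \Leftrightarrow^{s} T \vdash S \Leftrightarrow^{s} T$ is immediate. There are then two inductive steps, corresponding to clauses 2 and 3 of the definition of $R$-expression: the case $U_S = {-}V_S$ and the case $U_S = (\Gamma_1, \dots, V_S, \dots, \Gamma_m \Rightarrow W)$ (and symmetrically when the distinguished occurrence is in the succedent $W$ rather than in the antecedent list).

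For the negation step, I would show that from $V_S \Leftrightarrow^{s} V_T$ one derives ${-}V_S \Leftrightarrow^{s} {-}V_T$. Unfolding $\Leftrightarrow^{s}$, the hypothesis gives $V_S \Rightarrow V_T$, $V_T \Rightarrow V_S$, ${-}V_T \Rightarrow {-}V_S$, and ${-}V_S \Rightarrow {-}V_T$; what is needed for ${-}V_S \Leftrightarrow^{s} {-}V_T$ is ${-}V_S \Rightarrow {-}V_T$, ${-}V_T \Rightarrow {-}V_S$, ${-}{-}V_T \Rightarrow {-}{-}V_S$, and ${-}{-}V_S \Rightarrow {-}{-}V_T$. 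The first two are literally among the hypotheses; the last two reduce, via the convention ${-}{-}R = R$, to $V_T \Rightarrow V_S$ and $V_S \Rightarrow V_T$, which are again hypotheses. So this step is essentially bookkeeping around the definition of $\Rightarrow^{s}$ together with the double-negation convention. For the sequent step, say the occurrence is in the antecedent, $U_S = (\Delta, V_S, \Gamma \Rightarrow W)$: using $RF$, $WL$, $PL$, $CL$ and especially $Cut$, one replaces $V_S$ by $V_T$ inside a derivation — intuitively, derive $V_T$ from the hypotheses together with $V_S \Rightarrow V_T$ and cut — and dually for the succedent, using $RI^+$/$LI^+$ to move between the sequent $\Delta, V_S, \Gamma \Rightarrow W$ and an actual derivation in which $V_S$ figures as a premise. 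The contravariant clause of $\Leftrightarrow^{s}$ (carrying $S \Rightarrow^{s} T$ as ${-}T \Rightarrow {-}S$ as well) is exactly what makes this go through for the ${-}(\Gamma \Rightarrow S)$ shapes handled by $RI^-$ and $LI^-$, since there the nested expression appears in a negated, hence polarity-reversed, position.

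I expect the main obstacle to be the careful statement of the induction so that it is genuinely structural and so that the two directions of each $\Rightarrow^{s}$ — and thus the four component sequents hidden in $U_S \Leftrightarrow^{s} U_T$ — are all maintained simultaneously as the context grows; in particular one must track how polarity flips at each ${-}$ and at each passage into the antecedent of a $\Rightarrow$, and verify that the hypothesis $S \Leftrightarrow^{s} T$, being symmetric in the two directions and already closed under the contravariant move, supplies whatever is needed at either polarity. Since the statement asserts exactly the symmetric, polarity-closed relation $\Leftrightarrow^{s}$ rather than mere one-directional derivability, the induction hypothesis is strong enough to be self-propagating, and the remaining work is the routine cut-and-weakening manipulations, which — as the text notes — are straightforward adaptations of the corresponding arguments in \cite{Kutschera1968} and \cite{Kutschera1969}.
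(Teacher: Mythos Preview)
Your proposal is correct and follows essentially the same approach as the paper, which simply refers to \cite[p.~8]{Kutschera1968}: a structural induction on how the distinguished occurrence of $S$ sits inside $U_S$, with the base case trivial, the $-$-step handled by the symmetry built into $\Leftrightarrow^{s}$ together with the convention $--R = R$, and the $\Rightarrow$-step handled by $RI^{\pm}$/$LI^{\pm}$ plus the structural rules and $Cut$. Your emphasis that the polarity-closed form of $\Leftrightarrow^{s}$ is precisely what lets the induction propagate through both antecedent positions and negated contexts is exactly the point that makes the argument work.
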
 

\noindent $U_S$ is here an $R$-expression containing an occurrence of $S$ as $R$-subformula and $U_T$ is obtained from $U_S$ by replacing this occurrence by $T$. 
The proof is identical to the one given in \cite[p. 8]{Kutschera1968}.

\noindent On grounds of the rules (I)-(IV), we can now also prove the following substitution theorem:

\begin{theorem}\label{substitution2}
$ A \Leftrightarrow^{s} B \vdash  C_A \Leftrightarrow^{s} C_B$.
\end{theorem}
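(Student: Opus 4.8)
The plan is to argue by induction on the construction of the $R$-expression $C_{A}$, the content that goes beyond Theorem~\ref{substitution} being confined to one case. If the distinguished occurrence of $A$ is all of $C_{A}$, then $C_{B}=B$ and the claim is the hypothesis. If $C_{A}=-S_{A}$ with the occurrence inside $S_{A}$, then, since $--X$ abbreviates $X$, the strict equivalences $-S_{A}\Leftrightarrow^{s}-S_{B}$ and $S_{A}\Leftrightarrow^{s}S_{B}$ denote the very same quadruple of sequents, so this case is just the induction hypothesis applied to $S_{A}$; and if $C_{A}=(\Gamma_{A}\Rightarrow S_{A})$ with the occurrence inside $\Gamma_{A}$ or $S_{A}$, the induction hypothesis gives the strict equivalence of the affected component with its counterpart, and Theorem~\ref{substitution} lifts it to $C_{A}\Leftrightarrow^{s}C_{B}$. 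The remaining, substantive case is $C_{A}=F(D^{A}_{1},\dots,D^{A}_{n})$, where $F$ is a connective governed by schemata of the forms (I)--(IV) and the distinguished occurrence of $A$ lies inside exactly one argument, say $D^{A}_{j}$; write $C_{B}=F(D^{B}_{1},\dots,D^{B}_{n})$, so that $D^{A}_{i}=D^{B}_{i}$ for $i\neq j$ and, by the induction hypothesis, $A\Leftrightarrow^{s}B\vdash D^{A}_{j}\Leftrightarrow^{s}D^{B}_{j}$.

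For each $R$-expression $S_{ik}$ occurring in the schemata for $F$, write $S^{A}_{ik}$, resp.\ $S^{B}_{ik}$, for the result of substituting $D^{A}_{1},\dots,D^{A}_{n}$, resp.\ $D^{B}_{1},\dots,D^{B}_{n}$, for its formula components $A_{1},\dots,A_{n}$. Since those components lie literally among $A_{1},\dots,A_{n}$, the formula $D^{A}_{j}$ occurs in $S^{A}_{ik}$ as an $R$-subformula, so Theorem~\ref{substitution}, iterated over the finitely many such occurrences, gives $A\Leftrightarrow^{s}B\vdash S^{A}_{ik}\Leftrightarrow^{s}S^{B}_{ik}$; unfolding $\Leftrightarrow^{s}$ we thus have available the sequents $S^{A}_{ik}\Rightarrow S^{B}_{ik}$, $S^{B}_{ik}\Rightarrow S^{A}_{ik}$, $-S^{A}_{ik}\Rightarrow -S^{B}_{ik}$ and $-S^{B}_{ik}\Rightarrow -S^{A}_{ik}$. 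It remains to lift these through $F$, and by the symmetry of the hypothesis in $A$ and $B$ it suffices to derive $C_{A}\Rightarrow C_{B}$ and $-C_{A}\Rightarrow -C_{B}$. For the former, applying the left schema (II) for $F$ with empty side context and right-hand side $C_{B}$ reduces the goal to $S^{A}_{i1},\dots,S^{A}_{is_{i}}\Rightarrow C_{B}$ for each $i=1,\dots,t$; fixing $i$ and writing $\Gamma_{i}$ for $S^{A}_{i1},\dots,S^{A}_{is_{i}}$, weakening turns each $S^{A}_{ik}\Rightarrow S^{B}_{ik}$ into $\Gamma_{i}\Rightarrow S^{B}_{ik}$, and feeding these into the $i$-th schema of the right rule (I) for $F$ --- instantiated at the arguments $D^{B}_{1},\dots,D^{B}_{n}$ and with side context $\Gamma_{i}$ --- yields exactly $\Gamma_{i}\Rightarrow C_{B}$. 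For $-C_{A}\Rightarrow -C_{B}$ the argument is dual, now via schemata (III) and (IV): applying the left schema (IV) reduces the goal to $-S^{A}_{1k_{1l}},\dots,-S^{A}_{tk_{tl}}\Rightarrow -C_{B}$ for each $l=1,\dots,r$ (with $r=s_{1}\times\dots\times s_{t}$), and then weakening the sequents $-S^{A}_{ik_{il}}\Rightarrow -S^{B}_{ik_{il}}$ and applying the $l$-th schema of the right rule (III) at the arguments $D^{B}_{1},\dots,D^{B}_{n}$ delivers each of these.

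The step that requires genuine care --- and where a careless attempt would go wrong --- is the matching of side contexts: rules (I) and (III) require all of their premises to carry one and the same side context $\Delta$, which is precisely why each single-premise sequent $S^{A}_{ik}\Rightarrow S^{B}_{ik}$ must first be weakened up to $\Gamma_{i}\Rightarrow S^{B}_{ik}$ before the schema is applied, and dually (II) and (IV) must be instantiated with exactly the matching context. Keeping all four strict-equivalence sequents for each $S_{ik}$ on hand throughout, rather than a single direction, guarantees that every required leaf is available, so no appeal to \textit{Cut} is needed in the lifting step; beyond this the reasoning is a routine transcription of von Kutschera's treatment of the analogous lemma in \cite{Kutschera1969}.
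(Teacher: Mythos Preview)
Your argument is correct and supplies exactly what the paper only gestures at: the paper merely says the theorem holds ``on grounds of the rules (I)--(IV)'' and refers back to \cite{Kutschera1969}, without spelling out the lifting step. Your case for $C_A=F(D^{A}_1,\dots,D^{A}_n)$ --- obtaining $S^{A}_{ik}\Leftrightarrow^{s}S^{B}_{ik}$ from the induction hypothesis via Theorem~\ref{substitution}, then weakening to a common side context and applying (I)/(II) and (III)/(IV) --- is precisely the intended argument, and your remark about matching side contexts is well taken.

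One small point of presentation: in the paper's statement $C_A$ is a \emph{formula}, not a general $R$-expression, so your cases $C_A=-S_A$ and $C_A=(\Gamma_A\Rightarrow S_A)$ are vacuous here. What you have actually written is closer to a direct proof of Theorem~\ref{substitution3} (replacement of a formula inside an arbitrary $R$-expression), which the paper instead obtains by combining Theorems~\ref{substitution} and \ref{substitution2}. This does no harm --- the induction simply never visits those branches when $C_A$ is a formula --- but you could streamline by inducting on the formula complexity of $C_A$ and dropping those two cases.
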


\noindent Here, $C_A$ is a formula containing a particular occurrence of $A$ and $C_B$ the result of replacing this occurrence in $C_A$ by $B$. 

Together with Theorem \ref{substitution}, we also get this more generalized version of the substitution theorem:

\begin{theorem}\label{substitution3}
$ A \Leftrightarrow^{s} B \vdash S_A \Leftrightarrow^{s} S_B$ (with $S_A$ being an $R$-expression containing an occurrence of formula $A$).
\end{theorem}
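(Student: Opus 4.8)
The plan is to reduce Theorem~\ref{substitution3} to the two substitution results already in hand, namely Theorem~\ref{substitution} (substitution of strictly equivalent $R$-subformulas inside an arbitrary $R$-expression) and Theorem~\ref{substitution2} (substitution of strictly equivalent formulas inside a formula). The statement to prove is that from $A \Leftrightarrow^{s} B$ we can derive $S_A \Leftrightarrow^{s} S_B$, where $S_A$ is an $R$-expression containing a distinguished occurrence of the \emph{formula} $A$ as an $R$-subformula, and $S_B$ results from replacing that occurrence by $B$. The key observation is that this occurrence of $A$ sits inside some \emph{formula component} $C$ of $S_A$: by the inductive definition of $R$-expressions and of formula components, every occurrence of a formula inside an $R$-expression lies within a uniquely determined formula component (a maximal $R$-subformula of $R$-degree $0$). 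So the distinguished occurrence of $A$ in $S_A$ is a particular occurrence of $A$ inside that formula $C$, and replacing it by $B$ turns $C$ into $C_B$ in the sense of Theorem~\ref{substitution2}.

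First I would make precise the factorization just described: writing $S_A = U_C$ for the $R$-expression $S_A$ viewed as containing the formula component $C$ as an $R$-subformula (where $U$ is the surrounding $R$-expression context), we have $S_B = U_{C_B}$, since replacing the $A$-occurrence by $B$ affects only the component $C$. Next, I would apply Theorem~\ref{substitution2} to the hypothesis $A \Leftrightarrow^{s} B$, obtaining $C_A \Leftrightarrow^{s} C_B$; here I use that $C = C_A$ and that the occurrence of $A$ in question is exactly the distinguished occurrence inside the formula $C$. Finally, I would feed this into Theorem~\ref{substitution}, instantiated with $S := C_A$, $T := C_B$, and the context $U$: since $C_A$ is an $R$-subformula of $U_{C_A} = S_A$, Theorem~\ref{substitution} yields $U_{C_A} \Leftrightarrow^{s} U_{C_B}$, that is, $S_A \Leftrightarrow^{s} S_B$. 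Chaining the derivations (the derivation of $C_A \Leftrightarrow^{s} C_B$ from $A \Leftrightarrow^{s} B$ followed by the derivation of $S_A \Leftrightarrow^{s} S_B$ from $C_A \Leftrightarrow^{s} C_B$, composing via $Cut$) gives a derivation of $S_A \Leftrightarrow^{s} S_B$ from $A \Leftrightarrow^{s} B$, which is the claim.

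The step I expect to require the most care — though it is conceptual rather than computational — is the clean identification of the formula component containing the distinguished occurrence, and the verification that "replace this occurrence of $A$ by $B$" commutes with the two-level decomposition into (context $U$) and (formula $C$ with its internal $A$-occurrence). This is really just a bookkeeping argument over the inductive definitions of $R$-expression, $R$-subformula, and formula component from Section~3, so it should be stated briefly and left to the reader, exactly as the analogous reductions are handled in \cite{Kutschera1968,Kutschera1969}. One should also note that $A \Leftrightarrow^s B$ and $B \Leftrightarrow^s A$ are interderivable, so the argument is symmetric in $A$ and $B$ and genuinely delivers the biconditional-style $\Leftrightarrow^{s}$ rather than only one direction. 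No new appeal to the connective rules (I)--(IV) is needed beyond what already went into Theorem~\ref{substitution2}.
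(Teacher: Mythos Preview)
Your proposal is correct and matches the paper's own approach: the paper simply states that Theorem~\ref{substitution3} is obtained ``together with Theorem~\ref{substitution}'' from Theorem~\ref{substitution2}, i.e., exactly the two-step factorization you spell out (apply Theorem~\ref{substitution2} to the formula component containing the distinguished occurrence of $A$, then apply Theorem~\ref{substitution} to propagate through the surrounding $R$-expression context). Your write-up is just a more detailed version of this one-line reduction.
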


\noindent Now the operators $\{{\sim}, \wedge, \vee, \rightarrow\}$ shall be defined according to schemata (I)-(IV) with the following rules:

\vspace{0.2cm}
\quad  \infer[{\sim} L]{\Delta, {\sim} A \Rightarrow S}{\Delta, -A \Rightarrow S } \qquad \infer[{\sim} R]{\Delta \Rightarrow {\sim} A}{\Delta \Rightarrow -A}

\vspace{0.2cm}
\quad \infer[{\sim} L_-]{\Delta, -{\sim} A \Rightarrow S}{\Delta, A \Rightarrow S} \qquad \infer[{\sim} R_-]{\Delta \Rightarrow -{\sim} A}{\Delta \Rightarrow A}

\vspace{0.2cm}
\quad \infer[\wedge L]{\Delta, A \wedge B \Rightarrow S}{\Delta, A, B \Rightarrow S} \qquad \infer[\wedge R]{\Delta \Rightarrow A \wedge B}{\Delta \Rightarrow A \qquad \Delta \Rightarrow B}

\vspace{0.2cm}
\quad \infer[\wedge L_-]{\Delta, -(A \wedge B) \Rightarrow S}{\Delta, -A \Rightarrow S \quad \Delta, -B \Rightarrow S}\quad \infer[\wedge R_-]{\Delta \Rightarrow -(A \wedge B)}{\Delta \Rightarrow -A}\quad \infer[\wedge R_-]{ \Delta \Rightarrow -(A \wedge B)}{\Delta \Rightarrow -B}

\vspace{0.2cm}
\quad \infer[\vee L]{\Delta, A \vee B \Rightarrow S}{\Delta, A \Rightarrow S \qquad \Delta, B \Rightarrow S}\qquad \infer[\vee R]{\Delta \Rightarrow A \vee B}{\Delta \Rightarrow A}\quad \infer[\vee R]{\Delta \Rightarrow A \vee B}{\Delta \Rightarrow B}

\vspace{0.2cm}
\quad \infer[\vee L_-]{\Delta, -(A \vee B) \Rightarrow S}{\Delta, -A, -B \Rightarrow S} \qquad \infer[\vee R_-]{\Delta \Rightarrow -(A \vee B)}{\Delta \Rightarrow -A \qquad \Delta \Rightarrow -B}

\vspace{0.2cm}
\quad \infer[\rightarrow L]{\Delta, (A \rightarrow B) \Rightarrow S}{\Delta, (A \Rightarrow B) \Rightarrow S} \qquad \infer[\rightarrow R]{\Delta \Rightarrow (A \rightarrow B)}{\Delta \Rightarrow (A \Rightarrow B)}

\vspace{0.2cm}
\quad \infer[\rightarrow L_-]{\Delta, -(A \rightarrow B)\Rightarrow S}{\Delta, -(A \Rightarrow B)\Rightarrow S} \qquad \infer[\rightarrow R_-]{\Delta \Rightarrow -(A \rightarrow B)}{\Delta, -(A \Rightarrow B)}
\vspace{0.2cm}

The rules for implication are equivalent to the following, which are in a more standard formulation: 

\vspace{0.2cm}
\quad \infer[\rightarrow L^*]{\Delta, A \rightarrow B \Rightarrow S}{\Delta \Rightarrow A \qquad \Delta, B \Rightarrow S} \qquad \infer[\rightarrow R^*]{\Delta \Rightarrow A \rightarrow B}{\Delta, A \Rightarrow B}

\vspace{0.2cm}
\quad \infer[\rightarrow L^*_-]{\Delta, -(A \rightarrow B) \Rightarrow S}{\Delta \Rightarrow A \qquad \Delta, -B \Rightarrow S} \qquad \infer[\rightarrow R^*_-]{\Delta \Rightarrow -(A \rightarrow B)}{\Delta, A \Rightarrow -B}

\vspace{0.2cm}
From this it follows that: $\vdash A \rightarrow B \Leftrightarrow^{s} A \Rightarrow B$ and $\vdash {\sim} A \Leftrightarrow^{s} -A$.
 
\section{Functional completeness}
The problem of functional completeness for a given logic $\mathcal{L}$ involves identifying a set $\Theta$ of logical connectives of $\mathcal{L}$ such that every logical connective of $\mathcal{L}$ can be explicitly defined through a finite number of compositions using elements of $\Theta$.
We will show that the system of operators $\{{\sim}, \wedge, \vee, \rightarrow\}$ of \texttt{C} is functionally complete.

Following \cite{Kutschera1969}, we start by noting that we can express the structural symbols of our system in terms of the connectives from $\{{\sim}, \wedge, \vee, \rightarrow\}$ of \texttt{C} in the following way.

\noindent For every $R$-expression $S$ we have a mapping onto a formula $\overline{S}$ according to the following rules:

1.) If $S$ is a formula, then $\overline{S}=S$.

2.) $\overline{-S}={\sim}\overline{S}$.

3.) If $\Delta$ is the series of $R$-expressions $S_1,..., S_n$, then $\overline{\Delta}=\overline{S_1},..., \overline{S_n}$ and $\overline{\overline{\Delta}}=\overline{S_1}\wedge...\wedge \overline{S_n}$.

4.) $\overline{\Delta \Rightarrow S}=\overline{\overline{\Delta}}\rightarrow \overline{S}$.

5.) $\overline{\Rightarrow S}=\overline{S}$.

Then the following holds:

\begin{theorem}\label{substitution4} 
$\vdash\overline{S} \Leftrightarrow^{s} S$.
\end{theorem}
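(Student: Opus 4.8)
The plan is to prove $\vdash \overline{S} \Leftrightarrow^s S$ by induction on the $R$-degree of $S$, with a subsidiary induction on the structure of $S$. The base case, where $S$ is a formula of $R$-degree $0$, is immediate: by clause 1.) of the translation $\overline{S} = S$, and $S \Leftrightarrow^s S$ follows from $RF$ together with the definition of $\Leftrightarrow^s$ (the reflexive sequents $S \Rightarrow S$ and $-S \Rightarrow -S$ are both instances of $RF$). The inductive step splits according to whether $S$ has the form $-T$ or the form $(\Delta \Rightarrow T)$.

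For the case $S = -T$: here $T$ is not itself of the form $-U$ (by clause 2.) of the definition of $R$-expressions), so I can apply the induction hypothesis to $T$ to get $\vdash \overline{T} \Leftrightarrow^s T$, i.e. $\vdash \overline{T} \Rightarrow^s T$ and $\vdash T \Rightarrow^s \overline{T}$. By clause 2.) of the translation, $\overline{-T} = {\sim}\overline{T}$, and by the identity $\vdash {\sim} A \Leftrightarrow^s -A$ established at the end of Section~5 (applied to the formula $\overline{T}$), together with the substitution theorem (Theorem~\ref{substitution}) applied inside the context $-(\cdot)$, I get $\vdash {\sim}\overline{T} \Leftrightarrow^s -\overline{T}$. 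Then Theorem~\ref{substitution}, using the already-established $\overline{T} \Leftrightarrow^s T$ in the context $-(\cdot)$, gives $\vdash -\overline{T} \Leftrightarrow^s -T$. Chaining these strict equivalences (transitivity of $\Leftrightarrow^s$, which follows from $Cut$) yields $\vdash \overline{-T} \Leftrightarrow^s -T$.

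For the case $S = (\Delta \Rightarrow T)$ with $\Delta = S_1, \ldots, S_n$: the $R$-degrees of each $S_i$ and of $T$ are strictly smaller than that of $S$, so the induction hypothesis gives $\vdash \overline{S_i} \Leftrightarrow^s S_i$ for each $i$ and $\vdash \overline{T} \Leftrightarrow^s T$. By clauses 3.) and 4.) of the translation, $\overline{\Delta \Rightarrow T} = (\overline{S_1} \wedge \cdots \wedge \overline{S_n}) \rightarrow \overline{T}$. The argument now has three moves. First, using the strict equivalence $\vdash A \rightarrow B \Leftrightarrow^s A \Rightarrow B$ from Section~5 (with $A = \overline{S_1} \wedge \cdots \wedge \overline{S_n}$ and $B = \overline{T}$), reduce the target to showing $\vdash \bigl((\overline{S_1} \wedge \cdots \wedge \overline{S_n}) \Rightarrow \overline{T}\bigr) \Leftrightarrow^s (S_1, \ldots, S_n \Rightarrow T)$. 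Second, show that a conjunction on the left of $\Rightarrow$ can be "unpacked" into a comma-separated series, i.e. $\vdash \bigl((\overline{S_1} \wedge \cdots \wedge \overline{S_n}) \Rightarrow \overline{T}\bigr) \Leftrightarrow^s (\overline{S_1}, \ldots, \overline{S_n} \Rightarrow \overline{T})$; one direction uses the $\wedge L$ rule together with $RI^+/LI^+$, the other uses $\wedge R$ (or $\wedge L_-$ and $\wedge R_-$ for the refutation half, invoking the $RI^-/LI^-$ rules for the negated case). Third, apply the substitution theorems (Theorem~\ref{substitution}, and Theorem~\ref{substitution3} since the $S_i$ may themselves contain formula occurrences) repeatedly, using the induction hypotheses $\overline{S_i} \Leftrightarrow^s S_i$ and $\overline{T} \Leftrightarrow^s T$, to replace each $\overline{S_i}$ by $S_i$ and $\overline{T}$ by $T$ inside the context $(\cdot, \ldots, \cdot \Rightarrow \cdot)$, arriving at $(S_1, \ldots, S_n \Rightarrow T)$. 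Again chain everything by transitivity.

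I expect the main obstacle to be the refutation half of the second move in the sequent case — showing that $-\bigl((\overline{S_1} \wedge \cdots \wedge \overline{S_n}) \Rightarrow \overline{T}\bigr)$ and $-(\overline{S_1}, \ldots, \overline{S_n} \Rightarrow \overline{T})$ are interderivable — because this is exactly where the connexive reading of refutation enters: by the rules $RI^-$ and $LI^-$, refuting $\Gamma \Rightarrow U$ amounts to deriving $-U$ from $\Gamma$, so one must check carefully that collapsing $\overline{S_1}, \ldots, \overline{S_n}$ into $\overline{S_1} \wedge \cdots \wedge \overline{S_n}$ on the antecedent side is compatible with this, using $\wedge L$ together with the structural rules ($PL$, $CL$, $WL$, $Cut$). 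The positive halves and the $-T$ case are routine once the Section~5 equivalences $\vdash {\sim} A \Leftrightarrow^s -A$ and $\vdash A \rightarrow B \Leftrightarrow^s A \Rightarrow B$ and the substitution theorems are in hand; as the excerpt notes, the whole argument is a straightforward adaptation of the corresponding lemma in \cite{Kutschera1969}.
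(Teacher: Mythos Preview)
Your proposal is correct and follows essentially the same approach the paper indicates: a straightforward induction on the complexity (i.e., the build-up) of $S$ as an $R$-expression, using the Section~5 equivalences $\vdash {\sim} A \Leftrightarrow^s -A$ and $\vdash A \rightarrow B \Leftrightarrow^s (A \Rightarrow B)$ together with the substitution theorem, exactly as in \cite[p.~109f.]{Kutschera1969}. Two very minor remarks: a single structural induction on $R$-expressions is cleaner than your nested $R$-degree/structure induction (since $-T$ and $T$ share the same $R$-degree), and you should also record the degenerate empty-antecedent case $\overline{\Rightarrow T} = \overline{T}$ from clause~5.), which is handled by $RI^+$/$LI^+$ with empty $\Delta$.
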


\noindent The proof is a simple induction on the complexity of $S$. It is identical to \cite[p. 109f.]{Kutschera1969}. By a standard argument (see \cite[p. 110]{Kutschera1969}), we get the following:


\begin{theorem}\label{substitution5}
 $\vdash F(A_1,...,A_n) \Leftrightarrow^{s} (\overline{S_{1 1}} \wedge... \wedge \overline{S_{1 s_{1}}}) \vee...\vee(\overline{S_{t 1}} \wedge...\wedge \overline{S_{t s_{t}}}).$
\end{theorem}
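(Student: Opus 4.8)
The plan is to reduce the strict equivalence to four ordinary derivabilities and prove each by a direct cut-down on rule applications, using the connective schemata (I)--(IV) together with the rules for $\wedge$ and $\vee$. Write $F$ for $F(A_1,\dots,A_n)$, put $X_i:=\overline{S_{i1}}\wedge\dots\wedge\overline{S_{is_i}}$ for the $i$-th conjunction and $D:=X_1\vee\dots\vee X_t$ for the whole right-hand side. Unfolding the definitions of $\Leftrightarrow^s$ and $\Rightarrow^s$, the claim $\vdash F\Leftrightarrow^s D$ splits into $F\Rightarrow D$, $D\Rightarrow F$, $-D\Rightarrow -F$ and $-F\Rightarrow -D$. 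By Theorems~\ref{substitution} and~\ref{substitution4} we have $\overline{S_{ik}}\Leftrightarrow^s S_{ik}$ in a form that lets us swap an $R$-subformula $S_{ik}$ for $\overline{S_{ik}}$ (and $-S_{ik}$ for $-\overline{S_{ik}}$) in either direction inside any derivation; I will use this silently, together with the structural rules RF, WL, PL and Cut.

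\emph{The two positive derivabilities.} For $D\Rightarrow F$: iterate $\vee L$ to reduce the goal to $X_i\Rightarrow F$ for each $i\le t$, then iterate $\wedge L$ to get $\overline{S_{i1}},\dots,\overline{S_{is_i}}\Rightarrow F$, then replace each $\overline{S_{ik}}$ by $S_{ik}$; the resulting sequent $S_{i1},\dots,S_{is_i}\Rightarrow F$ is exactly the conclusion of the $i$-th rule of schema (I) instantiated with $\Delta:=S_{i1},\dots,S_{is_i}$, and each of its premises $\Delta\Rightarrow S_{ik}$ is an instance of RF followed by WL and PL. For $F\Rightarrow D$: apply schema (II) with empty $\Delta$ and $T:=D$, reducing the goal to $S_{i1},\dots,S_{is_i}\Rightarrow D$ for each $i\le t$; then iterate $\vee R$ to aim at the $i$-th disjunct $X_i$ and $\wedge R$ to reduce to $S_{i1},\dots,S_{is_i}\Rightarrow\overline{S_{ik}}$ for each $k\le s_i$, which follows from $S_{ik}\Rightarrow\overline{S_{ik}}$ by WL and PL.

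\emph{The two refutation derivabilities.} These are the De Morgan mirrors of the positive halves, using schemata (III) and (IV) in place of (I) and (II) and the rules $\wedge R_-,\wedge L_-,\vee R_-,\vee L_-$ in place of their positive counterparts; recall that the index $l=1,\dots,r$ (with $r=s_1\times\dots\times s_t$) of the rules (III), equivalently the branches of (IV), runs through all ways of choosing one index $k_{il}\in\{1,\dots,s_i\}$ per block $i$. For $-F\Rightarrow -D$: apply schema (IV) with empty $\Delta$ and $T:=-D$; branch $l$ has antecedent $-S_{1k_{1l}},\dots,-S_{tk_{tl}}$ and succedent $-D$, and iterating $\vee R_-$ reduces this to deriving $-X_i$ from that antecedent for each $i\le t$, which a suitable chain of $\wedge R_-$ steps reduces to deriving $-\overline{S_{ik_{il}}}$ --- equivalently $-S_{ik_{il}}$ --- from it, and $-S_{ik_{il}}$ is a member of that antecedent, so RF, WL, PL suffice. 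For $-D\Rightarrow -F$: iterate $\vee L_-$ to reduce the goal to $-X_1,\dots,-X_t\Rightarrow -F$, then iterate $\wedge L_-$, which splits the derivation into branches indexed by tuples $(k_1,\dots,k_t)\in\{1,\dots,s_1\}\times\dots\times\{1,\dots,s_t\}$, the branch for $(k_1,\dots,k_t)$ having antecedent $-\overline{S_{1k_1}},\dots,-\overline{S_{tk_t}}$ and succedent $-F$; after replacing each $-\overline{S_{ik_i}}$ by $-S_{ik_i}$ and observing that these tuples are precisely the combinations $(k_{1l},\dots,k_{tl})$, the branch for index $l$ is the conclusion of the $l$-th rule of schema (III) instantiated with $\Delta:=-S_{1k_{1l}},\dots,-S_{tk_{tl}}$, whose premises $\Delta\Rightarrow -S_{ik_{il}}$ are again instances of RF, WL, PL.

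\emph{Main obstacle.} No single step is conceptually deep --- the argument is essentially bookkeeping, the standard argument of \cite{Kutschera1969} adapted to the present setting --- but the point that genuinely has to be got right is the combinatorial identification in the refutation halves: the branching produced by iterating $\wedge L_-$ across the $t$ blocks (dually, the family of premises produced by iterating $\vee R_-$) must be seen to match, branch for branch, the $r=s_1\times\dots\times s_t$ premise-lists of schema (III)/(IV), and this is exactly what is secured by having read off (III)/(IV) as the De Morgan duals of (I)/(II). Degenerate cases --- an empty $S_{ik}$-list, $t=1$, empty $\Delta$ --- are absorbed by the usual conventions for empty conjunctions and disjunctions, as in \cite{Kutschera1969}.
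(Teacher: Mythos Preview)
Your proof is correct and is precisely the ``standard argument'' from \cite[p.~110]{Kutschera1969} that the paper invokes in lieu of a written-out proof: split $\Leftrightarrow^s$ into the four derivabilities, handle the positive pair via (I)/(II) together with the $\wedge$/$\vee$ rules, and handle the refutation pair via the De Morgan-dual (III)/(IV) together with $\wedge_-$/$\vee_-$, using Theorem~\ref{substitution4} (and Cut, or equivalently Theorem~\ref{substitution}) to pass between $S_{ik}$ and $\overline{S_{ik}}$. Your identification of the $r=s_1\cdots s_t$ branches from iterated $\wedge L_-$ with the $r$ premise-tuples of (III)/(IV) is exactly the point where the connexive refutation rules are exercised, and you have it right.
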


By Theorem \ref{substitution3}, this means that  $ F(A_1,...,A_n) $ can be replaced in all contexts by the formula $(\overline{S_{1 1}} \wedge... \wedge \overline{S_{1 s_{1}}}) \vee...\vee(\overline{S_{t 1}} \wedge...\wedge \overline{S_{t s_{t}}})$, i.e., is defined by that formula, which is constituted just by the formulas $A_1,...,A_n$ and the connectives $\wedge, \vee, \rightarrow, {\sim}$.
The system with these operators is thus functionally complete.

All that remains to show is that our system actually corresponds to \texttt{C}.
We will show that the sequents\footnote{Recall that `sequents' in our higher-order sequent calculus refer to $R$-expressions of degree 1.} containing only connectives from $\{\wedge, \vee, \rightarrow, {\sim}\} $ derivable in $\texttt{SC}_\infty$ correspond exactly to the ones derivable in \texttt{G3C}, the sequent calculus for \texttt{C} spelled out in the appendix.

First, we will need a bit more notation. For every formula $A$, we define the formula $A^*$ as follows:

1.) $p^*=p$, for each propositional variable $p$.

2.) $(F(A_1,...,A_n))^*$ is obtained from $(\overline{S_{1 1}} \wedge... \wedge \overline{S_{1 s_{1}}}) \vee...\vee(\overline{S_{t 1}} \wedge...\wedge \overline{S_{t s_{t}}})$ by simultaneously replacing each $A_i$ with $A_i^*$.

Finally, for any $R$-expression $S$, we let $S^*=(\overline{S})^*$ and if $\Delta$ is the series of $R$-expressions $S_1,..., S_n$, then $\Delta^*=S_1^*,..., S_n^*.$

\begin{lemma}\label{Lemma}
    An $R$-expression $\Delta\Rightarrow S$ is derivable in $\texttt{SC}_\infty$ iff the sequent $\Delta^*\Rightarrow S^*$ is derivable in \texttt{G3C}.
\end{lemma}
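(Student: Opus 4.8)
The plan is to prove the biconditional by establishing both directions via induction on derivations, using the translation $(-)^*$ to transport derivations back and forth between $\texttt{SC}_\infty$ and $\texttt{G3C}$. The key observation that makes this work is that the starred formula $A^*$ is, by Theorem~\ref{substitution5} together with Theorem~\ref{substitution3}, strictly equivalent to $A$ in $\texttt{SC}_\infty$ — more precisely, one first shows by induction on the complexity of $A$ that $\vdash A \Leftrightarrow^s A^*$ in $\texttt{SC}_\infty$, and hence (by Theorem~\ref{substitution4}) that $\vdash S \Leftrightarrow^s S^*$ for every $R$-expression $S$. This means $\Delta \Rightarrow S$ and $\Delta^* \Rightarrow S^*$ are interderivable in $\texttt{SC}_\infty$, so for the left-to-right direction it suffices to show that any $\texttt{SC}_\infty$-derivation of a sequent built only from $\{{\sim},\wedge,\vee,\rightarrow\}$ can be simulated in $\texttt{G3C}$; and since $A^*$ is by construction a formula over $\{{\sim},\wedge,\vee,\rightarrow\}$, the claim reduces to showing that $\texttt{SC}_\infty \vdash \Delta \Rightarrow S$ implies $\texttt{G3C} \vdash \Delta^* \Rightarrow S^*$, with the converse handled separately.

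For the forward direction, I would argue by induction on the height of the $\texttt{SC}_\infty$-derivation of $\Delta \Rightarrow S$. The structural rules $RF$, $WL$, $PL$, $CL$, $Cut$ translate directly to the corresponding structural rules of $\texttt{G3C}$ (or to admissible structural principles — weakening, contraction, cut admissibility in $\texttt{G3C}$). The crux is the higher-order rules $RI^+$, $LI^+$, $RI^-$, $LI^-$ and the connective rules: one checks that under $(-)^*$ the expression $(\Delta \Rightarrow S)^*$ unfolds, via clause~4 of the $\overline{(-)}$-map, to $\overline{\overline{\Delta}} \rightarrow \overline{S}$ and then starred, so that $RI^+$ and $LI^+$ become precisely the right- and left-implication rules of $\texttt{G3C}$ (modulo the already-noted equivalences collapsing iterated commas to conjunctions), while $RI^-$, $LI^-$ become the rules for negated implication, matching the connexive clauses $\rightarrow R^*_-$ and $\rightarrow L^*_-$. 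The connective rules ${\sim} L, {\sim} R, \wedge L, \wedge R, \ldots$ translate term-by-term to the corresponding $\texttt{G3C}$ rules. For the backward direction, I would symmetrically induct on the height of the $\texttt{G3C}$-derivation of $\Delta^* \Rightarrow S^*$, noting that each $\texttt{G3C}$ rule is mirrored by a derivable rule of $\texttt{SC}_\infty$ (using the starred equivalences to re-identify formulas), so any $\texttt{G3C}$ proof of $\Delta^* \Rightarrow S^*$ lifts to an $\texttt{SC}_\infty$ proof of $\Delta^* \Rightarrow S^*$, and then strict equivalence gives back $\Delta \Rightarrow S$.

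The main obstacle, and the place where care is genuinely needed, is the handling of the higher-order $R$-expressions of degree $>1$ that can occur inside $\Delta$ or $S$ mid-derivation even when the endsequent has degree $1$: the translation $(-)^*$ must be shown to commute appropriately with the rules $RI^\pm$ and $LI^\pm$, which introduce and eliminate nested $\Rightarrow$ and $-$. Concretely, one must verify that $(\Gamma \Rightarrow \Delta)^*$ in the premise of $LI^+$ — i.e., the list of sequents $\Gamma \Rightarrow U_i$ — translates compatibly with the single $\texttt{G3C}$ premise one wants, which requires the conjunction-collapsing equivalences from clause~3 of $\overline{(-)}$. Relatedly, one must confirm that the De Morgan-dual reading baked into schemata (III)–(IV), and in particular the convention $--S = S$, is respected by $(-)^*$; here the identity $\overline{-S} = {\sim}\,\overline{S}$ together with $\texttt{G3C}$'s behavior on $\sim$ must be checked to cohere, especially for the negated-implication case where the connexive rule $\rightarrow R^*_-$ rather than the $\texttt{N4}$-style rule is the target. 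Once these commutation lemmas are in place, the two inductions are routine rule-by-rule verifications.
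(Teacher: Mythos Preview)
Your proposal is correct and follows essentially the same route as the paper: the backward direction uses that every \texttt{G3C} rule is derivable in $\texttt{SC}_\infty$ together with the strict equivalence $S \Leftrightarrow^s S^*$ (your ``key observation'', which the paper compresses into a reference to Theorem~\ref{substitution4}), and the forward direction is an induction on the length of the $\texttt{SC}_\infty$-derivation, case-splitting on the last rule applied. The obstacles you flag---the comma/conjunction collapsing under $(-)^*$ for $LI^+$ and the matching of $RI^-,LI^-$ with the connexive negated-implication rules---are exactly the places where the paper's argument is terse, so your extra care there is well placed rather than a deviation.
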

\begin{proof}
    Clearly, every rule of \texttt{G3C} is derivable in $\texttt{SC}_\infty$. So, if $\vdash_{\texttt{G3C}}\Delta^*\Rightarrow S^*$, then $\vdash_{\texttt{SC}_\infty} \Delta^*\Rightarrow S^*$. It follows from Theorem \ref{substitution4} that $\vdash_{\texttt{SC}_\infty} \Delta\Rightarrow S$.

    Next, we show by the length of the derivation that $\vdash_{\texttt{SC}_\infty} \Delta\Rightarrow S$ implies $\vdash_{\texttt{G3C}}\Delta^*\Rightarrow S^*:$ If $\Delta\Rightarrow S$ is an initial sequent, i.e., of the form $T\Rightarrow T$, then $\Delta^* \Rightarrow S^*$ is of the form $T^*\Rightarrow T^*$, 
    which is easily derivable in \texttt{G3C}.
    The induction step for the remaining structural rules ($WL$, $PL$, $CL$, $Cut$) is trivial. 
    
    If the last step of the derivation is an application of $RI^+$, then $\Delta\Rightarrow S$ is of the form  $ \Gamma \Rightarrow (\Sigma \Rightarrow T)$ and $\vdash_{\texttt{SC}_\infty} \Gamma, \Sigma \Rightarrow T$. 
    By the induction hypothesis, we have that $\vdash_{\texttt{G3C}} \Gamma^*, \Sigma^* \Rightarrow T^*$. Now, $\Delta^*\Rightarrow S^*$ is $\Gamma^*\Rightarrow(\Sigma^*\rightarrow T^*)$ which gives $\vdash_{\texttt{G3C}} \Delta^* \Rightarrow S^*$. 
    A similar argument applies to $LI^+$, $RI^-$, and $LI^-$. 

    Now, suppose that the last step is an application of the rule (I). 
    Then $\Delta\Rightarrow S$ is of the form $\Delta \Rightarrow F(A_1,...,A_n)$ and $\vdash_{\texttt{SC}_\infty}\Delta \Rightarrow S_{i 1}$, and ... , and $\vdash_{\texttt{SC}_\infty} \Delta \Rightarrow S_{i s_{i}}$ for some $i\leq t$. 
    By the induction hypothesis, $\vdash_{\texttt{G3C}}\Delta^* \Rightarrow S^*_{i 1}$, and ... , and $\vdash_{\texttt{G3C}} \Delta^* \Rightarrow S^*_{i s_{i}}$. 
    So, $\vdash_{\texttt{G3C}} \Delta^* \Rightarrow S^*_{i 1}\wedge ... \wedge S^*_{i s_{i}}$ and therefore $\vdash_{\texttt{G3C}} \Delta^* \Rightarrow (F(A_1,...,A_n))^*$. 
    Similar arguments apply to (II), (III), and (IV).

\end{proof}

\begin{theorem}
If $\Gamma \cup \{A\}$ is a set of formulas containing only connectives from $\{\wedge, \vee, \rightarrow, {\sim}\} $, then 
    $$\vdash_{\texttt{SC}_\infty} \Delta \Rightarrow A \text{ iff } \vdash_{\texttt{G3C}} \Delta \Rightarrow A .  $$
\end{theorem}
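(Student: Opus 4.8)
The plan is to obtain this theorem as an immediate corollary of Lemma~\ref{Lemma}. That lemma already gives $\vdash_{\texttt{SC}_\infty}\Delta\Rightarrow A$ iff $\vdash_{\texttt{G3C}}\Delta^*\Rightarrow A^*$ for arbitrary $R$-expressions $\Delta\Rightarrow A$, so the only thing left to check is that, when every formula occurring in $\Delta$ and in $A$ is built solely from $\{\wedge,\vee,\rightarrow,{\sim}\}$, the $*$-translation acts as the identity on such formulas, i.e.\ $A^*=A$ and hence $\Delta^*=\Delta$. (I read the statement with $\Delta$ ranging over finite series of formulas from $\Gamma$, which is the only reading compatible with the hypothesis that $\Gamma\cup\{A\}$ is a \emph{set of formulas}; then $\Delta^*=\Delta$ follows formula-by-formula from $A^*=A$, and $\Delta\Rightarrow A$ is literally a \texttt{G3C} sequent.) Granting this, Lemma~\ref{Lemma} yields $\vdash_{\texttt{SC}_\infty}\Delta\Rightarrow A$ iff $\vdash_{\texttt{G3C}}\Delta^*\Rightarrow A^*$ iff $\vdash_{\texttt{G3C}}\Delta\Rightarrow A$, which is the claim.

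So the work reduces to proving $A^*=A$ by induction on the complexity of $A$. The base case is $p^*=p$, which holds by definition. For the inductive step I would treat the four connectives one at a time, in each case unfolding $(F(A_1,\dots,A_n))^*$ through the right-introduction instance of schema~(I) that was fixed for $\{{\sim},\wedge,\vee,\rightarrow\}$ together with the clauses defining $\overline{\cdot}$. For $\wedge$, the single clause of $\wedge R$ has premises $A,B$, so $(A\wedge B)^*$ is $\overline{A}\wedge\overline{B}$ with $A,B$ replaced by $A^*,B^*$, i.e.\ $A^*\wedge B^*$; for $\vee$, the two clauses of $\vee R$ have the single premises $A$ and $B$, so $(A\vee B)^*=A^*\vee B^*$; for ${\sim}$, the single clause of ${\sim}R$ has premise $-A$ and $\overline{-A}={\sim}\overline{A}={\sim}A$, so $({\sim}A)^*={\sim}(A^*)$; and for $\rightarrow$, the single clause of $\rightarrow R$ has premise $(A\Rightarrow B)$ and $\overline{A\Rightarrow B}=\overline{\overline{A}}\rightarrow\overline{B}=A\rightarrow B$, so $(A\rightarrow B)^*=A^*\rightarrow B^*$. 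In every case the induction hypothesis $A^*=A$, $B^*=B$ collapses the result to the original formula. This is precisely the point at which the restriction to $\{\wedge,\vee,\rightarrow,{\sim}\}$ is used: for a connective introduced by arbitrary instances of (I)--(IV), $(\cdot)^*$ produces only a strictly equivalent formula (via Theorem~\ref{substitution5} and induction), not the connective back verbatim.

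I do not anticipate a genuine obstacle; the mathematical content is already packed into Lemma~\ref{Lemma} and the earlier sections. The one place calling for care is the unfolding in the inductive step: one must compute $\overline{-S}$, $\overline{\overline{\Delta}}$ (which for a one-element series is just $\overline{S}$, not a one-term conjunction), and $\overline{\Delta\Rightarrow S}$ faithfully, and one must feed into $(\cdot)^*$ the instances of~(I) actually declared for ${\sim},\wedge,\vee,\rightarrow$ — in particular the implication rule $\rightarrow R$ with premise $(A\Rightarrow B)$, not the ``standard'' $\rightarrow R^*$, since only the former is literally an instance of~(I). A minor expository matter is simply to flag at the outset, as above, that $\Delta$ is to be read as a series of formulas from $\Gamma$, so that $\Delta^*=\Delta$ is an immediate consequence of $A^*=A$.
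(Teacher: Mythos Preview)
Your proposal is correct and matches the paper's approach: both derive the theorem from Lemma~\ref{Lemma} together with the observation that the $*$-translation acts as the identity on formulas built solely from $\{\wedge,\vee,\rightarrow,{\sim}\}$. The paper compresses this into the single line $\vdash_{\texttt{G3C}}\Delta\Rightarrow A$ iff $\vdash_{\texttt{SC}_\infty}\Delta^*\Rightarrow A^*$ iff $\vdash_{\texttt{SC}_\infty}\Delta\Rightarrow A$, leaving the $A^*=A$ verification implicit, whereas you spell out that induction explicitly; your version is in fact the clearer of the two.
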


\begin{proof}
    We have $\vdash_{\texttt{G3C}} \Delta \Rightarrow A$ iff $\vdash_{\texttt{SC}_\infty} \Delta^* \Rightarrow A^*$ iff $\vdash_{\texttt{SC}_\infty} \Delta \Rightarrow A.$
\end{proof}

\section{Conclusion}
We used a purely proof-theoretical method of showing functional completeness for the connectives of the connexive logic \texttt{C}. 
This is a desirable approach from a perspective in the spirit of proof-theoretic semantics, which demonstrates that model-theoretic notions are not needed to determine the meaning of logical connectives. 
Furthermore, by considering not only proof but also refutation rule schemata for connectives our approach follows the bilateralist principles of treating proofs and refutations on a par.
The essential difference from similar results for other systems has its source in the connexive understanding of inferences, leading to a different understanding of (meta-)refutation than is usually taken for granted.
With this understanding, though, and according adjustments, we can get the same completeness result for \texttt{C}, i.e., a non-trivial negation inconsistent logic, as we get for other `well-behaved' systems.
Thus, this paper provides a contribution to furthering the understanding and appreciation of the system \texttt{C} and of connexive logics in general by bringing in a bilateralist proof-theoretic semantics perspective.
Also, it shows that having a dependency between definitions of proofs and definitions of refutations is not the key criterion for consistency of a system.
Rather, our background assumptions on how refutation of inferences work are decisive here, namely whether we assume a `classical' or a connexive understanding.

\appendix
\section*{Appendix}

\subsection*{Sequent calculus for \texttt{C}}

The following sequent calculus can be found in \cite[p. 516]{OmoriWansing2020}.
\vspace{0.5cm}

{\Large \textbf{\texttt{G3C}}}

\vspace{0.2cm}

\textbf{Zero-premise rules:}

\vspace{0.2cm}

\quad 
\infer[\scriptstyle Rf]{\Gamma, p \Rightarrow p}{}
\quad \quad
\infer[\scriptstyle Rf^{{\sim}}]{\Gamma, {\sim} p \Rightarrow {\sim} p}{}

\vspace{0.2cm}

\textbf{Rules for $\wedge$:}

\vspace{0.2cm}
\quad   
\infer[\scriptstyle R \wedge]{\Gamma \Rightarrow A \wedge B}{\Gamma \Rightarrow A \quad \quad \Gamma \Rightarrow B}
\quad  \quad
\infer[\scriptstyle L\wedge]{\Gamma, A \wedge B \Rightarrow C}{\Gamma, A, B\Rightarrow C}

\vspace{0.2cm}

\textbf{Rules for $\vee$:}

\vspace{0.2cm}
\quad 
\infer[\scriptstyle R\vee_1]{\Gamma \Rightarrow A \vee B}{\Gamma \Rightarrow A}
\quad \hspace{-0.35cm}
\infer[\scriptstyle R\vee_2]{\Gamma \Rightarrow A \vee B}{\Gamma \Rightarrow B}
\quad  \hspace{-0.35cm}
\infer[\scriptstyle L\vee]{\Gamma, A \vee B \Rightarrow C}{\Gamma, A \Rightarrow C \quad \Gamma, B \Rightarrow C}

\vspace{0.2cm}

\textbf{Rules for $\rightarrow$:}

\vspace{0.2cm}
\quad
\infer[\scriptstyle R\rightarrow]{\Gamma \Rightarrow A \rightarrow B}{\Gamma, A \Rightarrow B}
\quad \quad
\infer[\scriptstyle L\rightarrow]{\Gamma, A \rightarrow B \Rightarrow C}{\Gamma, A \rightarrow B \Rightarrow A \quad \quad \Gamma, B \Rightarrow C}

\vspace{0.2cm}

\textbf{Rules for ${\sim}$:}

\vspace{0.2cm}
\quad  
\infer[\scriptstyle R{\sim} {\sim}]{\Gamma \Rightarrow {\sim}{\sim} A}{\Gamma \Rightarrow A}
\quad \hspace{-0.35cm}
\infer[\scriptstyle L {\sim} {\sim}]{\Gamma, {\sim}{\sim} A \Rightarrow C}{\Gamma, A \Rightarrow C}

\vspace{0.3cm}
\quad  
\infer[\scriptstyle R{\sim}\wedge_{1}]{\Gamma \Rightarrow {\sim}(A \wedge B)}{\Gamma \Rightarrow {\sim} A}
\quad \hspace{-0.35cm}
\infer[\scriptstyle R{\sim}\wedge_{2}]{\Gamma \Rightarrow {\sim}(A \wedge B)}{\Gamma \Rightarrow {\sim} B}
\quad \hspace{-0.35cm}
\infer[\scriptstyle L{\sim}\wedge]{\Gamma, {\sim}(A \wedge B) \Rightarrow C}{\Gamma, {\sim} A \Rightarrow C \quad \Gamma, {\sim} B \Rightarrow C}

\vspace{0.3cm}
\quad  
\infer[\scriptstyle R{\sim}\vee]{\Gamma \Rightarrow {\sim}(A \vee B)}{\Gamma \Rightarrow {\sim} A \quad \quad \Gamma \Rightarrow {\sim} B}
\quad  \quad
\infer[\scriptstyle L{\sim}\vee]{\Gamma, {\sim}(A \vee B) \Rightarrow C}{\Gamma, {\sim} A, {\sim} B \Rightarrow C}

\vspace{0.3cm}
\quad  
\infer[\scriptstyle R{\sim}\rightarrow]{\Gamma \Rightarrow {\sim}(A \rightarrow B)}{\Gamma, A \Rightarrow {\sim} B}
\quad \quad
\infer[\scriptstyle L{\sim}\rightarrow]{\Gamma, {\sim}(A \rightarrow B) \Rightarrow C}{\Gamma, {\sim}(A \rightarrow B)\Rightarrow A \quad \quad \Gamma, {\sim} B \Rightarrow C}

\end{document}